\newcommand{\distas}[1]{\mathbin{\overset{#1}{\kern\z@\sim}}}%
\newsavebox{\mybox}\newsavebox{\mysim}
\newcommand{\distras}[1]{%
  \savebox{\mybox}{\hbox{\kern1pt$\scriptstyle#1$\kern1pt}}%
  \savebox{\mysim}{\hbox{$\sim$}}%
  \mathbin{\overset{#1}{\kern\z@\resizebox{\wd\mybox}{\ht\mysim}{$\sim$}}}%
}
\newtheorem{lemma}{Lemma}
\begin{document}
%
% paper title
% Titles are generally capitalized except for words such as a, an, and, as,
% at, but, by, for, in, nor, of, on, or, the, to and up, which are usually
% not capitalized unless they are the first or last word of the title.
% Linebreaks \\ can be used within to get better formatting as desired.
% Do not put math or special symbols in the title.
%\title{Full-Duplex Cellular System with Cross-Channel Transmission and Interference Cancellation}
\title{Intelligent Surface Assisted Radar Stealth Against Unauthorized ISAC}
\author{
	\IEEEauthorblockN{
	Fan Xu, Wenhai Lai, \IEEEmembership{Graduate Student Member,~IEEE}, and Kaiming Shen, \IEEEmembership{Senior Member,~IEEE}
} % <-this % stops a space
\vspace{-1em}
\thanks{
% Manuscript submitted on \today.
Fan Xu is with the College of Electronic and Information Engineering, Tongji University, Shanghai,
China (email: xxiaof999@tongji.edu.cn).

Wenhai Lai and Kaiming Shen are with the School of Science and Engineering, The Chinese University of Hong Kong, Shenzhen, China (e-mails: wenhailai@link.cuhk.edu.cn; shenkaiming@cuhk.edu.cn).
}
}
% conference papers do not typically use \thanks and this command
% is locked out in conference mode. If really needed, such as for
% the acknowledgment of grants, issue a \IEEEoverridecommandlockouts
% after \documentclass

% for over three affiliations, or if they all won't fit within the width
% of the page, use this alternative format:
%
%\author{\IEEEauthorblockN{Michael Shell\IEEEauthorrefmark{1},
%Homer Simpson\IEEEauthorrefmark{2},
%James Kirk\IEEEauthorrefmark{3},
%Montgomery Scott\IEEEauthorrefmark{3} and
%Eldon Tyrell\IEEEauthorrefmark{4}}
%\IEEEauthorblockA{\IEEEauthorrefmark{1}School of Electrical and Computer Engineering\\
%Georgia Institute of Technology,
%Atlanta, Georgia 30332--0250\\ Email: see http://www.michaelshell.org/contact.html}
%\IEEEauthorblockA{\IEEEauthorrefmark{2}Twentieth Century Fox, Springfield, USA\\
%Email: homer@thesimpsons.com}
%\IEEEauthorblockA{\IEEEauthorrefmark{3}Starfleet Academy, San Francisco, California 96678-2391\\
%Telephone: (800) 555--1212, Fax: (888) 555--1212}
%\IEEEauthorblockA{\IEEEauthorrefmark{4}Tyrell Inc., 123 Replicant Street, Los Angeles, California 90210--4321}}

% use for special paper notices
%\IEEEspecialpapernotice{(Invited Paper)}

% make the title area
\maketitle

\begin{abstract}
The integration of radar sensors and communication networks as envisioned for the 6G wireless networks poses significant security risks, e.g., the user position information can be released to an unauthorized dual-functional base station (DFBS). To address this issue, we propose an intelligent surface (IS)-assisted radar stealth technology that prevents adversarial sensing. Specifically, we modify the wireless channels by tuning the phase shifts of IS in order to protect the target user from unauthorized sensing without jeopardizing the wireless communication link. In principle, we wish to maximize the distortion between the estimated angle-of-arrival (AoA) by the DFBS and the ground truth given the minimum signal-to-noise-radio (SNR) constraint for communication. Toward this end, we propose characterizing the problem as a game played by the DFBS and the IS, in which the DFBS aims to maximize a particular utility while the IS aims to minimize the utility. Although the problem is nonconvex, this paper shows that it can be optimally solved in closed form from a geometric perspective. According to the simulations, the proposed closed-form algorithm outperforms the  baseline methods significantly in combating unauthorized sensing while limiting the impacts on wireless communications.
\end{abstract}
\begin{keywords}
Intelligent surface (IS), radar stealth, unauthorized dual-functional base station (DFBS), closed-form algorithm.
\end{keywords}

\section{Introduction}
The integration of radar sensors and communication networks is an emerging technology for the 6G wireless network. Although the integrated sensing and communications (ISAC) technology is extensively pursued in the next-generation network, not every user terminal is willing to be a part of it because of the privacy concerns. Some user terminals may only want to access the communication service without their positions being sensed by the dual-functional base station (DFBS).
%Despite all the benefits brought by the integrated sensing and communications (ISAC), there are growing concerns about the resulting information security and privacy risks \cite{Mao2023}. For instance, the user position can be sensed by an unauthorized dual-functional base station (DFBS). 
The existing approaches include electromagnetic stealth \cite{Husnain2019} and spoofing \cite{Fang2024}. Electromagnetic stealth reduces the signal reflection by covering the target with electromagnetic wave-absorbing materials, while the spoofing technique generates interfering signals to mislead the radar detection. 
%However, these methods often depend on electromagnetic wave-absorbing materials or intricate signal processing, which may cause huge manufacturing costs and performance degradation due to large radar cross-section, rapidly changing environments or restricted computational capabilities \cite{Zheng2024b}.

%Security remains a critical issue in wireless networks. With the integration between radar sensors and communication systems in future sixth-generation (6G) wireless networks, sensing technologies pose the risk of exposing sensitive user data, including location and physical contours, to security threats \cite{Mao2023}. Traditional security technologies, including electromagnetic stealth and spoofing \cite{Husnain2019,Fang2024}, aim to either minimize the radar cross-section of targets or generate deceptive signals to mimic actual targets, thereby deceiving the radar system. However, these methods often depend on electromagnetic wave-absorbing materials or intricate signal processing, which may cause huge manufacturing costs and performance degradation due to large radar cross-section, rapidly changing environments or constrained computational capabilities \cite{Zheng2024b}.

Although one could prevent the DFBS from sensing the target user by  sending an interfering co-spectrum signal or stealth coating, the intelligent surface (IS) is a much more economical option because it is a simple passive device without any RF chains or high-cost wave-absorbing materials \cite{Wu2021}. Meanwhile, emitting an active interfering signal can violate the radiocommunication rule or even commit a criminal offense, whereas using the IS to reflect incident signals is utterly legitimate. The notion of IS can be traced back to 2010s \cite{Wu2021}. An IS is a planar surface composed of a (large) array of cheap reflective elements, each adjusting the phase shift of its induced reflected channel. The IS was initially proposed to enhance the wireless communication only \cite{Wu2021}. Many more recent works consider using the IS to enhance radar sensing along, namely the IS-assisted ISAC \cite{Liu2023,Rihan2024,Yuan2024,Xing2023,Shao2022}. Among these works, a self-sensing IS architecture is proposed in \cite{Shao2022}, where sensors are installed to detect the target position through signals sent by the IS controller. The research on using the IS to suppress sensing is still at an early stage, with only a limited number of works in the area. For instance, \cite{Zheng2024} suggests using the IS to reduce the reflected radar signal power by the Lagrange multiplier method, and \cite{Xiong2024} considers a similar problem setup and develops a semi--closed-form solution based on the Karush-Kuhn-Tucker (KKT) condition. Differing from the above two attempts, \cite{Wang2024} uses the IS to redirect the radar echo signal and  thereby create a so-called deceptive angle-of-arrival (AoA) for the radar set. Moreover, \cite{Shao2024} considers a more complicated scenario in which the authorized and unauthorized radar sets coexist; it devises a penalty dual decomposition method for the IS phase shift optimization to hinder the radar reception at the unauthorized radar sets. While the above works all assume that the channel state information (CSI) is available more or less, another recent work \cite{Staat2022} pursues an anti-sensing IS configuration in the absence of CSI.

The present work is quite distinct from the aforementioned works in the following three respects. First, our goal is to maximize the distortion between the estimated AoA by the unauthorized DFBS and the ground truth, while previous works mostly adopt a Cram{\'e}r-Rao bound objective. Second, most existing works \cite{Zheng2024,Xiong2024,Wang2024,Shao2024} assume that the IS is deployed at the target as depicted in Fig. \ref{Fig model}(a), whereas we assume that the IS is deployed in the environment as depicted in Fig. \ref{Fig model}(b). Third, aside from mitigating the radar detection, we aim to  preserve the communication quality at the same time; this is reflected by the signal-to-noise-radio (SNR) constraint of the communication signal. The resulting optimization problem is nonconvex, but we show that it can be optimally solved in closed form after some utility-based approximation.

\section{System Model}

%\begin{table}[t]
%\caption{Summary of main notation}
%\label{table notation}
%\begin{small}\begin{tabular}{|l|l|}
%\hline
%Notation                                       & Description                                                                                               \\ \hline
%$h_1,\alpha_1,\xi_1$                           & \begin{tabular}[c]{@{}l@{}}The channel, pass loss and\\ phase of the TA-user channel\end{tabular}         \\ \hline
%$h_2,\alpha_2,\xi_2,\psi_2^{AoA}$              & \begin{tabular}[c]{@{}l@{}}The channel, pass loss, phase\\ and AoA of the TA-IS channel\end{tabular}      \\ \hline
%$h_3,\alpha_3,\xi_3,\psi_3^{AoD}$              & \begin{tabular}[c]{@{}l@{}}The channel, pass loss, phase\\ and AoD of the IS-user channel\end{tabular}    \\ \hline
%$h_4,\alpha_4,\xi_4,\psi_4^{AoA}$              & \begin{tabular}[c]{@{}l@{}}The channel, pass loss, phase\\ and AoA of the RA-user channel\end{tabular}    \\ \hline
%$h_5,\alpha_5,\xi_5,\psi_5^{AoA},\psi_5^{AoD}$ & \begin{tabular}[c]{@{}l@{}}The channel, pass loss, phase, AoA\\ and AoD of the RA-IS channel\end{tabular} \\ \hline
%\end{tabular}\end{small}
%\end{table}

%\subsection{System Model}
Consider a wireless system as shown in Fig. \ref{Fig model}(b). The DFBS is serving a user terminal in the downlink. The DFBS has one transmit antenna (TA) and $M$ radar reception antennas (RA). Assume that the DFBS attempts to detect the AoA of the user terminal without getting authorized. Thus, the DFBS plays a dual role: it is a legitimate transmitter for wireless communications and in the meanwhile is an adversary that imposes risk to the privacy of the user terminal. An IS with $N$ reflective elements is deployed somewhere in the cellular network to prevent the unauthorized sensing. Denote by $\bm{\Theta}\triangleq \textrm{diag}(\theta_1,\ldots,\theta_N)$ the phase shift array of the IS, where each $\theta_n$ is the phase-shift complex exponential with $|\theta_n|=1$, i.e., a phase shift of $\angle\theta_n$ is induced in the reflected propagation channel associated with the $n$-th reflected element. Notice that although this paper considers the passive IS without signal amplification, the proposed algorithm can be readily applied to the active IS case in which the IS is capable of amplifying the incident signals \cite{Kang2024}, so that the reflected echo signals shall be even stronger. The IS in our model is owned and controlled by the receiver, while  the DFBS is assumed to be unaware of the IS deployment. The optimization of the phase shifts at the IS is conducted at the receiver side. For the CSI acquisition, the receiver first tries out different phase shifts on the IS and then measures the corresponding received signal sent from the DFBS, thereby recovering the CSI as in \cite{Sun2023}.

%A BS with one transmit antenna (TA) and $M$ receive antennas (RAs) serves a single-antenna user, and might estimate the user location from its echo signals during communications. To protect location privacy, the user employs an IS with $N$ reflective elements  to  alter the wireless environment, thereby not only protecting the user location but also improving communication quality.

%\begin{figure}[t]
%\centering
%\includegraphics[width=3.7cm]{}\vspace{-5pt}
%\caption{System model.}\label{Fig model}
%\vspace{-2em}
%\end{figure}

\begin{figure}[!t]
\begin{minipage}[t]{0.49\linewidth}
\centering
\includegraphics[height=3.8cm]{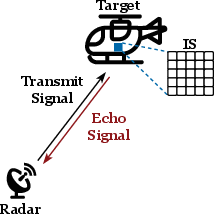}
\subcaption{ }
\end{minipage}
\begin{minipage}[t]{0.49\linewidth}
\centering
\includegraphics[height=3.8cm]{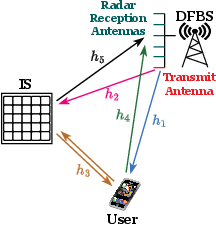}
\subcaption{ }
\end{minipage}
\caption{Two types of IS deployment.}
\label{Fig model}
\end{figure}

For ease of notation, we use the index $1$ to indicate the channel from the TA to the user terminal, the index $2$ the channel from the TA to the IS, the index $3$ the channel from the IS to the user terminal, the index $4$ the channel from the user terminal to the RA set, and the index $5$ the channel from the IS to the RA set, as shown in Fig. \ref{Fig model}(b). We assume that each transmission block consists of $L$ channel uses. For the transmit signal $\mathbf{x}\in\mathbb{C}^{L\times 1}$ from the DFBS, the received signal  $\mathbf{y}_U^\top \in\mathbb{C}^{1\times L}$ at the user terminal is given by
\begin{align}
\mathbf{y}_{U}^\top&=(h_{1}+\mathbf{h}_{3}^\top\bm{\Theta}\mathbf{h}_{2})\mathbf{x}^\top+ \mathbf{z}^\top_{U},\label{eqn model received signal}
\end{align}
where $h_{1}\in\mathbb{C}$, $\mathbf{h}_{2}\in\mathbb{C}^{N\times 1}$, $\mathbf{h}_{3}\in\mathbb{C}^{N\times 1}$, and the additive background noise $\mathbf{z}_{U}\in\mathbb{C}^{L\times 1}$. We remark that the transmit signal $\mathbf{x}\in\mathbb{C}^{L\times 1}$ carries the downlink information, and also that the DFBS uses the corresponding echo for sensing. Following the ISAC setting in \cite{Cheng2019}, we assume that $\mathbf{x}\in\mathbb{C}^{L\times 1}$ is i.i.d. with a power constraint $P$. 

% and $\mathbf{z}_{U}\in\mathbb{C}^{L\times 1}$ is the additive white Gaussian noise that satisfies $\mathbb{E}(\mathbf{z}_U\mathbf{z}_U^H)=\sigma^2\mathbf{I}_L$. In \eqref{eqn model received signal}, $\bm{\Theta}\triangleq \textrm{diag}(\theta_1,\ldots,\theta_N)$ is the phase shift matrix of the IS, where each $\theta_n$,  for $n\in\{1,\ldots,N\}$, is the phase shift of the $n$-th reflective element which   satisfies $|\theta_n|=1$. 

Moreover, the echo heard by the DFBS can be computed as
\begin{align}
\mathbf{Y}_{R}=\zeta(\mathbf{h}_{4}+\mathbf{H}_{5}\bm{\Theta}\mathbf{h}_{3})(h_{1}+\mathbf{h}_{3}^\top\bm{\Theta}\mathbf{h}_{2})\mathbf{x}^\top+\mathbf{Z}_{R},\label{eqn model echo signal}
\end{align}
where $\mathbf{h}_{4}\in\mathbb{C}^{M\times 1}$, $\mathbf{H}_{5}\in\mathbb{C}^{M\times N}$, $\zeta$ is the reflection coefficient of the radar cross-section of the user terminal, and $\mathbf{Z}_{R}\in\mathbb{C}^{M\times L}$ is the additive background noise at the DFBS.

Further, following the previous works \cite{Xiong2024,Zheng2024,Wang2024,Shao2024,Han2022,Song2023}, we model the above channels as
\begin{align*}
&h_{1}=\alpha_{1}\xi_{1},\quad \mathbf{h}_{2}=\alpha_{2}\xi_{2}\bm{\xi}_I(\psi_{2}^{AoA} ),\quad \mathbf{h}_{3}=\alpha_{3}\xi_{3}\bm{\xi}_I(\psi_{3}^{AoD}),\notag\\
&\mathbf{h}_{4}=\alpha_{4}\xi_{4}\bm{\xi}_R(\psi_{4}^{AoA}),\quad \mathbf{H}_{5}=\alpha_{5}\xi_{5}\bm{\xi}_R(\psi_{5}^{AoA})\bm{\xi}_I(\psi_{5}^{AoD})^\top,%\label{eqn los channel model 1}
\end{align*}
where $\{\xi_{1},\xi_{2}$, $\xi_{3}$, $\xi_{4},\xi_{5}\in\mathbb{C}\}$ are the phase-shift complex exponentials of the different channels, 
$\{\psi_{2}^{AoA},\psi_{4}^{AoA},\psi_{5}^{AoA}\}$ are the AoA for the different channels, $\{\psi_{3}^{AoD},\psi_{5}^{AoD}\}$ are the angle-of-departure (AoD) for the different channels, $\bm{\xi}_I(\cdot)\in\mathbb{C}^{M\times 1}$ is the steering vector of the IS,  $\bm{\xi}_R(\cdot)\in\mathbb{C}^{N\times 1}$ is the steering vector of the RA set, and $\{\alpha_{1},\alpha_{2},\alpha_{3},\alpha_{4},\alpha_{5}\in\mathbb{R}\}$ are the path-loss coefficients. In particular, because the TA and the RA set are in close proximity to each other, we can assume that the associated path-loss coefficients are approximately equal, i.e., $\alpha_1\approx\alpha_4$ and $\alpha_2\approx\alpha_5$. Likewise, we can further assume that $\psi_{2}^{AoA}\approx\psi_{5}^{AoD}$ as in \cite{Liu2022}. 

%Moreover, $\xi_{1}\in\mathbb{C}$ in \eqref{eqn los channel model 1}  denotes the phase of  the TA-user channel, and $\xi_{2}$, $\xi_{3}$, $\xi_{4},\xi_{5}\in\mathbb{C}$ denote the phase of the corresponding channel at  the first RA or the first reflective element. We have $|\xi_{i}|\!=\!1,\forall i\in\{1,\ldots,5\}$. 
%$\epsilon_R$ and $\epsilon_I$ denote the antenna spacing and reflective element spacing respectively. 

%\subsection{Location Sensing and Problem Formulation}

\section{Problem Formulation}

Aside from the downlink transmission, the DFBS wishes to sense the AoA of the user terminal, $\psi_{4}^{AoA}$. Recall that the DFBS is unaware of the IS and its related channels, so it models the received echo signal as
\begin{align}
\tilde{\mathbf{Y}}_R = \zeta\mathbf{h}_{4}h_{1}\mathbf{x}^\top+\mathbf{Z}_{R} = \zeta\alpha_{1}^2\xi_{1}\xi_{4} \bm{\xi}_R (\psi_{4}^{AoA})\mathbf{x}^\top +\mathbf{Z}_{R},\label{eqn fake echo}
\end{align}
in contrast to the actual model in \eqref{eqn model echo signal}. Based on the echo signal $\tilde{\mathbf{Y}}_R$, the DFBS estimates $\psi_{4}^{AoA}$ jointly with $\alpha= \zeta\alpha_{1}^2\xi_{1}\xi_{4}$ by the maximum likelihood method \cite{Bekkerman2006} as
\begin{align}
(\tilde{\alpha},\tilde{\psi}_{4}^{AoA})=\arg \min_{\alpha', \psi'} \left\| \textrm{vec}(\tilde{\mathbf{Y}}_R)-  \alpha'\Big(\mathbf{I}\otimes\bm{\xi}_R(\psi')\Big)\mathbf{x}\right\|.\label{eqn ML}
\end{align}
For fixed $\psi'$, the optimal estimate of $\alpha$ in \eqref{eqn ML} is given by
\begin{align}
\tilde{\alpha}&=\frac{\mathbf{x}^H (\mathbf{I} \otimes \bm{\xi}_R({\psi}')^H )}{\|(\mathbf{I} \otimes \bm{\xi}_R({\psi}') )\mathbf{x}\|_2^2}\textrm{vec}(\tilde{\mathbf{Y}}_R).\label{eqn solution ML 1}
\end{align}
After substituting \eqref{eqn solution ML 1} into \eqref{eqn ML}, we obtain the estimate of $\psi_{4}^{AoA}$:
\begin{align}
\tilde{\psi}_{4}^{AoA}%&=\arg\min_{\psi'}\left\|\textrm{vec}(\tilde{\mathbf{Y}}_R)-\tilde{\alpha}\Big(\mathbf{I}\otimes\bm{\xi}_R(\psi')\Big)\mathbf{x}\right\|\notag\\
%&=\arg\min_{\psi'}\left(\|\textrm{vec}(\tilde{\mathbf{Y}}_R)\|_2^2-\frac{|\mathbf{x}^\top\tilde{\mathbf{Y}}_R^H\bm{\xi}_R(\psi')|^2}{MPL}\right)\notag\\
&=\arg\max_{\psi'}|\mathbf{x}^\top\tilde{\mathbf{Y}}_R^H\bm{\xi}_R(\psi')|^2.\label{eqn solution ML 2}
\end{align}
The task of the IS is to hinder the above estimation\footnote{There are certain surveillance scenarios that enforce the target sensing, which can be addressed through legislation, e.g., by limiting the IS size.}. Thus, the IS design problem is to optimize the phase shift array $\bm\Theta$ to maximize the AoA estimation error (averaged over the random $\mathbf x$ and $\mathbf Z_R$): 
\begin{subequations}\label{eqn original problem}\begin{align}
\max_{\bm\Theta}&\quad \mathbb{E}_{\mathbf{x},\mathbf{Z}_R} \!\left[\left|\Big(\arg\max_{\psi'}|\mathbf{x}^\top {\mathbf{Y}}_R^H\bm{\xi}_R(\psi')|\Big)-\psi_{4}^{AoA}\right|\right]\label{eqn original problem 1}\\
\textrm{s.t. }&\quad\; \frac{P}{\sigma^2}|(h_{1}+\mathbf{h}_{3}^\top\bm{\Theta}\mathbf{h}_{2})|^2\ge \eta,\;|\theta_n|=1,\forall n\label{eqn original problem 2}
\end{align}\end{subequations}
% \begin{subequations}\label{eqn original problem}\begin{align}
% \max_{\bm\Theta}&\; \left|\mathbb{E}_{\mathbf{x},\mathbf{Z}_R}\left[\arg\max_{\psi'}|\mathbf{x}^T {\mathbf{Y}}_R^H\bm{\xi}_R(\psi')|^2\right]-\psi_{4}^{AoA}\right|^2\label{eqn original problem 1}\\
% \textrm{s.t. }&\;\; \frac{P}{\sigma^2}|(h_{1}+\mathbf{h}_{3}^T\bm{\Theta}\mathbf{h}_{2})|^2\ge \eta,\label{eqn original problem 2}\\
% &\;\; |\theta_n|=1,\forall n\in\{1,\ldots,N\},\label{eqn original problem 3}
% \end{align}\end{subequations}
where $\eta$ in \eqref{eqn original problem 2} is the minimum SNR threshold for ensuring the communication quality. 

\section{Proposed IS Optimization Method}\label{sec method}

Problem \eqref{eqn original problem} is intractable because the maximum likelihood estimator cannot be written in closed form. We first propose approximating \eqref{eqn original problem} as an inner-product utility maximizing problem. Then we show that the new problem can be solved geometrically despite its nonconvexity.

%In this section, we first analyze the objective function \eqref{eqn original problem 1}, and introduce the inner product of the user-RA channel and the steering vector of the target AoA $\psi_{4}^{AoA}$ as the utility metric to reformulate the objective function. We then redefine the optimization problem using the reflective channel as the new variable. By deriving the closed-form optimal solution for the reflective channel through geometry, we can obtain the closed-form solution for the phase shifts.

\subsection{Problem Approximation}\label{sec problem transform}

The maximum likelihood estimate of $\tilde{\psi}_{4}^{AoA}$ at the DFBS in \eqref{eqn solution ML 2} can be rewritten as
\begin{align}
&\arg\max_{\psi'}\left|\mathbf{x}^\top {\mathbf{Y}}_R^H\bm{\xi}_R(\psi')\right|\notag\\
%=&\arg\max_{\psi'}\left|{\sum^L_{l=1} \zeta^*|[\mathbf{x}]_l|^2\mathbf{g}^H\bm{\xi}_R(\psi')+\sum^L_{l=1} [\mathbf{x}]_l[\mathbf{Z}]_l^H\bm{\xi}_R(\psi')}\right|\notag\\
\approx&\arg\max_{\psi'}\Big|\zeta^*\mathbb{E}\big[|[\mathbf{x}]_l|^2\big]\mathbf{g}^H\bm{\xi}_R(\psi')+\mathbb{E}\big[ [\mathbf{x}]_l[\mathbf{Z}]_l^H\bm{\xi}_R(\psi')\big]\Big|\notag\\
=&\arg\max_{\psi'}\left|(\mathbf{h}_{4}+\mathbf{H}_{5}\bm{\Theta}\mathbf{h}_{3})^H\bm{\xi}_R(\psi')\right|.\label{eqn first change 3}
\end{align}
where $\mathbf{g}= (\mathbf{h}_{4}+\mathbf{H}_{5}\bm{\Theta}\mathbf{h}_{3})(h_{1}+\mathbf{h}_{3}^\top\bm{\Theta}\mathbf{h}_{2})$. In \eqref{eqn first change 3}, the second step follows by the law of large numbers, while the last step follows by $\mathbb{E}\big[ [\mathbf{x}]_l[\mathbf{Z}]_l^H\bm{\xi}_R(\psi')\big]=0$.
%We then have
% \begin{subequations}\label{eqn first change}\begin{align}
% &\mathbb{E}_{\mathbf{x},\mathbf{Z}_R}\left[\arg\max_{\psi'}|\mathbf{x}^T {\mathbf{Y}}_R^H\bm{\xi}_R(\psi')|^2\right]
% \notag\\
% =& \arg\max_{\psi'}\left|\mathbf{g}^H\bm{\xi}_R(\psi')\right|^2\label{eqn first change 1}\\
% =&\arg\max_{\psi'}\left|(h_{1}+\mathbf{h}_{3}^T\bm{\Theta}\mathbf{h}_{2})^*(\mathbf{h}_{4}+\mathbf{H}_{5}\bm{\Theta}\mathbf{h}_{3})^H\!\bm{\xi}_R(\psi')\right|^2\\
% =&\arg\max_{\psi'}\left|(\mathbf{h}_{4}+\mathbf{H}_{5}\bm{\Theta}\mathbf{h}_{3})^H\bm{\xi}_R(\psi')\right|^2.\label{eqn first change 3}
% \end{align}\end{subequations}
Thus, the maximum likelihood estimate of $\tilde{\psi}_{4}^{AoA}$ at the DFBS is in essence to maximize the above inner product. By the above approximation, the objective function in \eqref{eqn original problem 1} becomes
\begin{align}
\left|\Big(\arg\max_{\psi'}\left|(\mathbf{h}_{4}+\mathbf{H}_{5}\bm{\Theta}\mathbf{h}_{3})^H\bm{\xi}_R(\psi')\right|\Big)-\psi_{4}^{AoA}\right|.
\label{eqn objective no E}
\end{align}
% \begin{align}
% \left|\Big(\arg\max_{\psi'}|\mathbf{x}^T{\mathbf{Y}}_R^H\bm{\xi}_R(\psi')|^2\Big)-\psi_{4}^{AoA}\right|^2.
% \label{eqn objective no E}
% \end{align}
Now we assume that the DFBS learns from a ``genie'' the true $\psi_{4}^{AoA}$, and uses it to replace the estimated $\psi_{4}^{AoA}$ in \eqref{eqn objective no E}. Consequently, problem \eqref{eqn original problem} is approximated as
\begin{subequations}\label{eqn transform problem first}\begin{align}
\min_{\bm\Theta}&\quad \left|(\mathbf{h}_{4}+\mathbf{H}_{5}\bm{\Theta}\mathbf{h}_{3})^H \bm{\xi}_R(\psi_{4}^{AoA})\right|\label{eqn transform problem first 1}\\
\textrm{s.t. }&\quad\; \frac{P}{\sigma^2}|(h_{1}+\mathbf{h}_{3}^\top\bm{\Theta}\mathbf{h}_{2})|^2\ge \eta,\;|\theta_n|=1,\forall n.\label{eqn transform problem first 2}
\end{align}\end{subequations}
The above problem can be interpreted as an inner-product utility optimization, in which the new inner product $\left|(\mathbf{h}_{4}+\mathbf{H}_{5}\bm{\Theta}\mathbf{h}_{3})^H \bm{\xi}_R(\psi_{4}^{AoA})\right|$ mimics the original inner product $\left|(\mathbf{h}_{4}+\mathbf{H}_{5}\bm{\Theta}\mathbf{h}_{3})^H\bm{\xi}_R(\psi')\right|$ in \eqref{eqn first change 3} that reflects the capability of the DFBS to sense $\psi_{4}^{AoA}$. The aim of the IS is to minimize the inner-product utility under the communication constraint.

\subsection{Proposed Algorithm}

The inner-product utility problem in \eqref{eqn transform problem first} is still difficult due to its nonconvexity. Nevertheless, we show that it can be optimally solved in closed form by means of geometry.
First, introduce the variable 
\begin{equation}
\nu= \alpha_{2}\mathbf{h}_{3}^\top\bm{\Theta}\bm{\xi}_I(\psi_{2}^{AoA}),
\end{equation}
and thereby rewrite the reflective channels in \eqref{eqn model echo signal} as 
\begin{equation}
\mathbf{h}_{3}^\top\bm{\Theta}\mathbf{h}_{2}=\xi_{2}\nu\quad\text{and}\quad\mathbf{H}_{5}\bm{\Theta}\mathbf{h}_{3}=\xi_{5}\bm{\xi}_R(\psi_{5}^{AoA})\nu.
\end{equation}
Further, the objective function \eqref{eqn transform problem first 1} can be recast to
\begin{align}
&\left|(\mathbf{h}^H_{4}+\xi_{5}^*\bm{\xi}_R(\psi_{5}^{AoA})^H\nu^*)\bm{\xi}_R(\psi_{4}^{AoA})\right|.\label{eqn delete one order}
\end{align}
%Moreover, the following lemma characterizes the feasible region of problem \eqref{eqn transform problem first}.
\begin{lemma}\label{lemma feasible nu}
The feasible region of $\nu$ in \eqref{eqn transform problem first} is given by
\begin{equation}
\mathcal{V} = \left\{\nu: |h_{1}+\xi_{2}\nu|^2\ge \frac{\eta\sigma^2}{P}, |\nu|\le N\alpha_{2}\alpha_{3}\right\}.
\end{equation}
\end{lemma}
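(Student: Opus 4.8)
The plan is to reparametrize \eqref{eqn transform problem first} through the scalar $\nu$ and then characterize exactly which values of $\nu$ are compatible with the constraints. The first observation is that $\bm\Theta$ enters both the objective \eqref{eqn transform problem first 1} and the SNR constraint \eqref{eqn transform problem first 2} only through the reflective-channel expressions $\mathbf{h}_{3}^\top\bm{\Theta}\mathbf{h}_{2}=\xi_{2}\nu$ and $\mathbf{H}_{5}\bm{\Theta}\mathbf{h}_{3}=\xi_{5}\bm{\xi}_R(\psi_{5}^{AoA})\nu$, so it is legitimate to regard $\nu$ as the free variable. Substituting $\mathbf{h}_{3}^\top\bm{\Theta}\mathbf{h}_{2}=\xi_{2}\nu$ into \eqref{eqn transform problem first 2} and clearing $P/\sigma^2$ turns the SNR condition into $|h_{1}+\xi_{2}\nu|^2\ge \eta\sigma^2/P$, which is precisely the first defining inequality of $\mathcal{V}$. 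It therefore remains to show that, as $\bm\Theta$ ranges over all matrices with $|\theta_n|=1$, the attainable values of $\nu$ form exactly the closed disk $\{|\nu|\le N\alpha_{2}\alpha_{3}\}$.

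For the upper bound I would substitute $\mathbf{h}_{3}=\alpha_{3}\xi_{3}\bm{\xi}_I(\psi_{3}^{AoD})$ into the definition of $\nu$, giving
\begin{equation}
\nu=\alpha_{2}\alpha_{3}\xi_{3}\sum_{n=1}^N\big[\bm{\xi}_I(\psi_{3}^{AoD})\big]_n\,\theta_n\,\big[\bm{\xi}_I(\psi_{2}^{AoA})\big]_n.
\end{equation}
Writing $c_n=[\bm{\xi}_I(\psi_{3}^{AoD})]_n[\bm{\xi}_I(\psi_{2}^{AoA})]_n$ and using that the steering-vector entries and $\xi_{3}$ all have unit modulus (and that the path losses $\alpha_{2},\alpha_{3}$ are positive), the triangle inequality together with $|\theta_n|=1$ gives $|\nu|\le\alpha_{2}\alpha_{3}\sum_{n=1}^N|c_n|=N\alpha_{2}\alpha_{3}$. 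This settles the forward inclusion: every admissible $\bm\Theta$ produces a $\nu$ inside the disk, and combined with the SNR reformulation, every feasible $\bm\Theta$ yields $\nu\in\mathcal{V}$.

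The main obstacle is the reverse inclusion, namely \emph{achievability}: I must show that every $\nu$ with $|\nu|\le N\alpha_{2}\alpha_{3}$ is realized by some admissible phase configuration. Since $|c_n|=1$, this reduces to proving that the image of the map $(\theta_1,\dots,\theta_N)\mapsto\sum_{n=1}^N c_n\theta_n$ over the unit torus is the whole closed disk of radius $N$. Each summand $c_n\theta_n$ sweeps the unit circle, so this image is the Minkowski sum of $N$ unit circles, which equals the closed disk of radius $N$ whenever $N\ge 2$ (a condition trivially met by any IS). Concretely, I would first align all phases to attain modulus $N$, then split the elements into two groups rotated by $\pm\gamma$ so that the common modulus varies continuously from $N$ down to $0$, and finally apply a global phase to set the argument of the sum arbitrarily; this realizes every target value of $\sum_n c_n\theta_n$, hence every $\nu$ in the disk.

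Intersecting the achievable disk $\{|\nu|\le N\alpha_{2}\alpha_{3}\}$ with the SNR condition $|h_{1}+\xi_{2}\nu|^2\ge\eta\sigma^2/P$ then yields exactly $\mathcal{V}$, completing the argument. The only delicate point is the achievability step: I must confirm that the Minkowski sum of the unit circles fills the entire disk rather than merely an annulus, which is where the mild requirement $N\ge 2$ enters.
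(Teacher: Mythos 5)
Your proof is correct and its first half (necessity) coincides with the paper's: both substitute $\mathbf{h}_{3}^\top\bm{\Theta}\mathbf{h}_{2}=\xi_{2}\nu$ into the SNR constraint and bound $|\nu|$ by the triangle inequality. Where you genuinely diverge is the achievability half. The paper proves it \emph{constructively}: it exhibits explicit closed-form phases $\theta_n=\psi_n[\bm{\xi}_I(\psi_{3}^{AoD})]^*_n[\bm{\xi}_I(\psi_{2}^{AoA})]^*_n$, pairing elements so that each pair sums to $\frac{2\nu}{\alpha_{2}\alpha_{3}\xi_{3}N}$ (with the $\arccos$ term guaranteeing unit modulus of both members of the pair), and it treats even and odd $N$ by separate constructions, the odd case absorbing one leftover element into a corrected target $\bar\nu$. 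You instead argue \emph{abstractly} that the image of $(\theta_1,\dots,\theta_N)\mapsto\sum_n c_n\theta_n$ is the Minkowski sum of $N$ unit circles, hence the closed disk of radius $N$ for $N\ge 2$; this is true and can be made rigorous by induction (disk of radius $r\ge 1$ plus a unit circle is the disk of radius $r+1$) or by a continuity/rotation-invariance argument on the torus. One caveat: your concrete recipe of splitting the elements into two groups rotated by $\pm\gamma$ only sweeps the modulus continuously from $N$ down to $0$ when $N$ is even; for odd $N$ the best two-group split leaves a residual modulus of $1$ at $\gamma=\pi/2$, so you must fall back on the Minkowski-sum induction (or handle three elements separately via an equilateral configuration) --- precisely the case the paper's separate odd-$N$ construction addresses. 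The trade-off between the two routes is tangible: your argument is shorter and more transparent as a pure existence proof, but the paper's explicit construction is what Algorithm \ref{alg nu} actually uses to recover the phase shifts $\bm\Theta$ from the optimizer $\nu_{opt}$, so an existence-only proof would leave the final step of the algorithm unsupported.
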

\begin{proof}
We first show that the feasible region is a subset of $\mathcal V$. According to the definition, $\nu$ can be rewritten as
\begin{align}
\nu=\alpha_{2}\alpha_{3}\xi_{3}\sum_{n=1}^N\theta_n [\bm{\xi}_I(\psi_{3}^{AoD})]_n[\bm{\xi}_I(\psi_{2}^{AoA})]_n.\label{eqn nu circle region}
\end{align}
Since $|\xi_{3}|=|\theta_n [\bm{\xi}_I(\psi_{3}^{AoD})]_n[\bm{\xi}_I(\psi_{2}^{AoA})]_n|=1, \forall n$, $\nu$ must satisfy $|\nu|\le N\alpha_{2}\alpha_{3}$. Substituting the equality $\mathbf{h}_{3}^\top\bm{\Theta}\mathbf{h}_{2}=\xi_{2}\nu$ into \eqref{eqn original problem 2}, we  obtain the constraint $|h_{1}+\xi_{2}\nu|^2\ge \frac{\eta\sigma^2}{P}$. Therefore, the necessity is proved. 

We now further show that $\mathcal V$ is a subset of the feasible region. There always exists a set of phase shifts $\{\theta_1,\ldots,\theta_N\}$ that satisfy \eqref{eqn nu circle region} for an arbitrary $\nu\in \mathcal{V}$. When $N$ is an even number, these phase shifts can be set as $\theta_{n}=\psi_n[\bm{\xi}_I(\psi_{3}^{AoD})]^*_n[\bm{\xi}_I(\psi_{2}^{AoA})]^*_n$ for $n\in\{1,\ldots,N\}$ with 
\begin{align*}
%\label{psi_sol1}
&\psi_n =
\left\{ 
\begin{array}{ll}
e^{j(\angle \frac{\nu }{\alpha_{2}\alpha_{3}\xi_{3}}+\arccos\frac{|\nu|}{\alpha_{2}\alpha_{3}N})},& \textrm{ if $n$ is even},\\
\frac{2\nu }{\alpha_{2} \alpha_{3} \xi_{3} N}\!-\!e^{j(\angle\frac{\nu }{\alpha_{2} \alpha_{3} \xi_{3}}+\arccos\frac{|\nu |}{\alpha_{2} \alpha_{3} N}\!)},& \textrm{ if $n$ is odd}.
\end{array}\right.
\end{align*}
When $N$ is an odd number, we then let
\begin{align*}
%\label{psi_sol2}
&\psi_n =
\left\{
\begin{array}{ll}
e^{j(\angle \frac{\bar\nu }{\alpha_{2}\alpha_{3}\xi_{3}} +\arccos\frac{|\bar{\nu} |}{\alpha_{2}\alpha_{3}N})}, & \textrm{ if $n$ is even},\\
\frac{2\bar{\nu} }{\alpha_{2}\alpha_{3}\xi_{3}N}\!-\!e^{j(\angle\frac{\bar\nu }{\alpha_{2}\alpha_{3}\xi_{3}}+\arccos\frac{|\bar{\nu} |}{\alpha_{2}\alpha_{3} N})},& \textrm{ if $n$ is odd},
\end{array}\right.
\end{align*}
where $\bar{\nu}= \nu-\alpha_{2}\alpha_{3}\xi_{3}\psi_{N}$, and particularly let
\begin{align*}
%\label{psi_sol3}
& \psi_N =
\left\{
\begin{array}{ll}
e^{j(\angle\frac{\nu }{\alpha_{2}\alpha_{3}\xi_{3}} + \arccos\frac{\alpha_{2}\alpha_{3}}{2|\nu |})},  & \textrm{ if $|\nu |\le (N-1)|\alpha_{2}\alpha_{3}|$},\\
e^{j\frac{ \nu }{\alpha_{2}\alpha_{3}\xi_{3}}},  & \textrm{ if $|\nu |> (N-1)|\alpha_{2}\alpha_{3}|$}.
\end{array}\right.
\end{align*}
The proof is then completed.
\end{proof}
Moreover, we have 
\begin{multline}\label{eqn objective}
\left|(\mathbf{h}^H_{4}+\xi_{5}^*\bm{\xi}_R(\psi_{5}^{AoA})^H\nu^*)\bm{\xi}_R(\psi_{4}^{AoA})\right|^2= \\|\bm{\xi}_R(\psi_{5}^{AoA})^H\bm{\xi}_R(\psi_{4}^{AoA})|^2|\nu|^2+|\alpha_{1}|^2M^2\\
\quad +2\cdot\Re\big(\alpha_{1}^*\xi_{4}^*M\xi_{5}\bm{\xi}_R(\psi_{4}^{AoA})^H\bm{\xi}_R(\psi_{5}^{AoA})\nu\big).
\end{multline}
Define 
\begin{align}
a&=|\bm{\xi}_R(\psi_{5}^{AoA})^H\bm{\xi}_R(\psi_{4}^{AoA})|^2,\\
b&=\alpha_{1}\xi_{4}M\xi_{5}^*\bm{\xi}_R(\psi_{5}^{AoA})^H\bm{\xi}_R(\psi_{4}^{AoA}).
\end{align}
Now, in light of Lemma \ref{lemma feasible nu} and the above notation simplification, problem \eqref{eqn transform problem first} can be further rewritten as
\begin{subequations}\label{eqn transform problem second}\begin{align}
\min_\nu &\quad a|\nu|^2+2\cdot\Re(b^*\nu)\label{eqn transform problem second 1}\\
\textrm{s.t. } & \quad|h_{1}+\xi_{2}\nu|^2\ge \frac{\eta\sigma^2}{P}\label{eqn transform problem second 2}\\
&\quad |\nu|\le N\alpha_{2}\alpha_{3}.\label{eqn transform problem second 3}
\end{align}\end{subequations}
Solving the above problem can be interpreted as projecting $-\frac{b}{a}$ to the closes point inside the feasible region, as illustrated in Fig. \ref{Fig region}, where the blue and green circles denote the boundaries set by \eqref{eqn transform problem second 2} and \eqref{eqn transform problem second 3} respectively, and the feasible region is highlight in green. If the unconstrained optimal solution, $-\frac{b}{a}$, is given by the red point in Fig. \ref{Fig region},  then the optimal solution for \eqref{eqn transform problem second} will be the green point on the right, i.e., the intersection of the green circle ($|\nu|= N\alpha_{2}\alpha_{3}$) and the line segment between $-\frac{b}{a}$ and the origin point.

\begin{figure}[t]
\centering
\includegraphics[height=3.5cm]{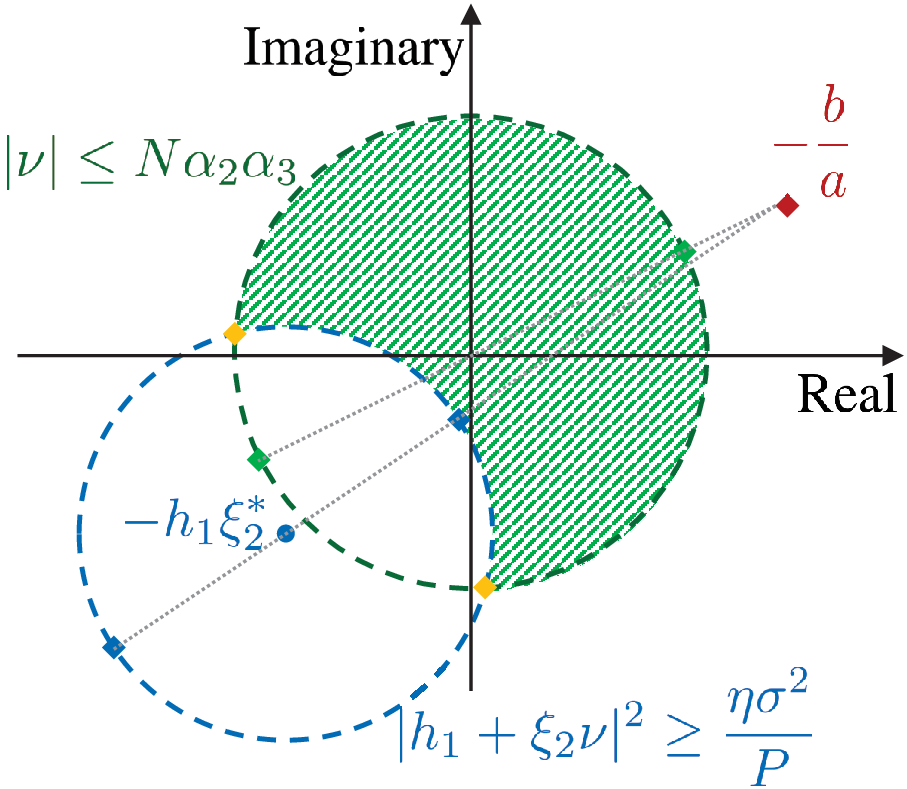}
\caption{Feasible region of $\nu$ in problem \eqref{eqn transform problem second}.}\label{Fig region}
\end{figure}

There are 6 candidates for the point closest to $-\frac{b}{a}$: $\nu_1=-\frac{b}{a}$,
$\nu_2= N \alpha_{3} \alpha_{2} e^{j(\angle  (h_{1}\xi^*_{2})+\arccos(\frac{\eta  \sigma^2/P-(N\alpha_{3}\alpha_{2})^2-|h_{1}|^2}{2N\alpha_{3}\alpha_{2} |h_{1}|}))}$, $\nu_3= N \alpha_{3} \alpha_{2} e^{j(\angle  (h_{1}\xi^*_{2})-\arccos(\frac{\eta  \sigma^2/P-(N\alpha_{3}\alpha_{2})^2-|h_{1}|^2}{2N\alpha_{3}\alpha_{2} |h_{1}|}))}$, $\nu_4=\sigma\sqrt{\frac{\eta}{P}}e^{j\angle(-b/a+h_{1}\xi^*_{2})}-h_{1}\xi^*_{2}$,
$\nu_5=-\sigma\sqrt{\frac{\eta}{P}}e^{j\angle(-b/a+h_{1}\xi^*_{2})}-h_{1}\xi^*_{2}$, $\nu_6=N\alpha_{3}\alpha_{2}e^{j\angle (-b/a)}$, $\nu_7=-N\alpha_{3}\alpha_{2}e^{j\angle (-b/a)}$.
These candidates can be geometrically explained as follows. If the unconstrained optimal solution $-\frac{b}{a}$ lies within the feasible region, then the optimal solution $\nu_{opt}$ for problem \eqref{eqn transform problem second} is $\nu_1 = -\frac{b}{a}$. If $-\frac{b}{a}$ is outside the feasible region, the optimal solution $\nu_{opt}$ for problem \eqref{eqn transform problem second} has three possible cases: (i) $\nu_{opt}$ is located at the intersection of the boundaries set by \eqref{eqn transform problem second 2} and \eqref{eqn transform problem second 3} (the yellow points in Fig. \ref{Fig region}); (ii) $\nu_{opt}$ is located at the intersection of the boundary set by \eqref{eqn transform problem second 2} and the line connecting $-\frac{b}{a}$ and $-h_{1}\xi^*_{2}$ (the blue points in Fig. \ref{Fig region}); (iii) $\nu_{opt}$ is located at the intersection of the boundary set by  \eqref{eqn transform problem second 3} and the line connecting $-\frac{b}{a}$ and the origin point (the green points in Fig. \ref{Fig region}).
The optimal solution of problem \eqref{eqn transform problem second} is the one that achieves the smallest objective value and satisfies constraints \eqref{eqn transform problem second 2} and \eqref{eqn transform problem second 3}. After obtaining the optimal solution $\nu_{opt}$, the corresponding optimal phase shifts $\bm\Theta$ can be recovered as $\theta_{n}=\psi_n[\bm{\xi}_I(\psi_{3}^{AoD})]^*_n[\bm{\xi}_I(\psi_{2}^{AoA})]^*_n$. Algorithm \ref{alg nu} summarizes the proposed method for \eqref{eqn transform problem first}.

\begin{figure}[t]
\vspace{-0.5\baselineskip}
\begin{algorithm}[H]
\caption{IS Beamforming Against Unauthorized ISAC}
\label{alg nu}
\vspace{-0.5em}
\begin{algorithmic}[1]
%\State\textbf{input:} $\Phi_K$, $N$, $T$, $L$
\State{Compute the seven candidates $\mathcal{S}=\{\nu_1,\ldots,\nu_7\}$.}
\For{$s=1,\ldots,7$}
\If{$|h_{1}+\xi_{2}\nu_s|^2< \frac{\eta\sigma^2}{P}$ or $|\nu_s|> N\alpha_{2}\alpha_{3}$}
\State{$\mathcal{S}\leftarrow\mathcal{S}\backslash \{v_s\}$}
\EndIf
\EndFor
\State{ $\nu_{opt}=\arg\min_{\nu\in\mathcal{S}}|(\mathbf{h}^H_{4}+\xi^*_5\bm{\xi}_R(\psi_{5}^{AoA})^H\nu^*)\bm{\xi}_R(\psi_{4}^{AoA})|^2$ and $\theta_{n}=\psi_n[\bm{\xi}_I(\psi_{3}^{AoD})]^*_n[\bm{\xi}(\psi_{2}^{AoA})]^*_n$.}
\end{algorithmic}
\end{algorithm}
\end{figure}

\section{Simulations}\label{sec simulation}
\begin{figure}[t]
\centering
\includegraphics[height=3.5cm]{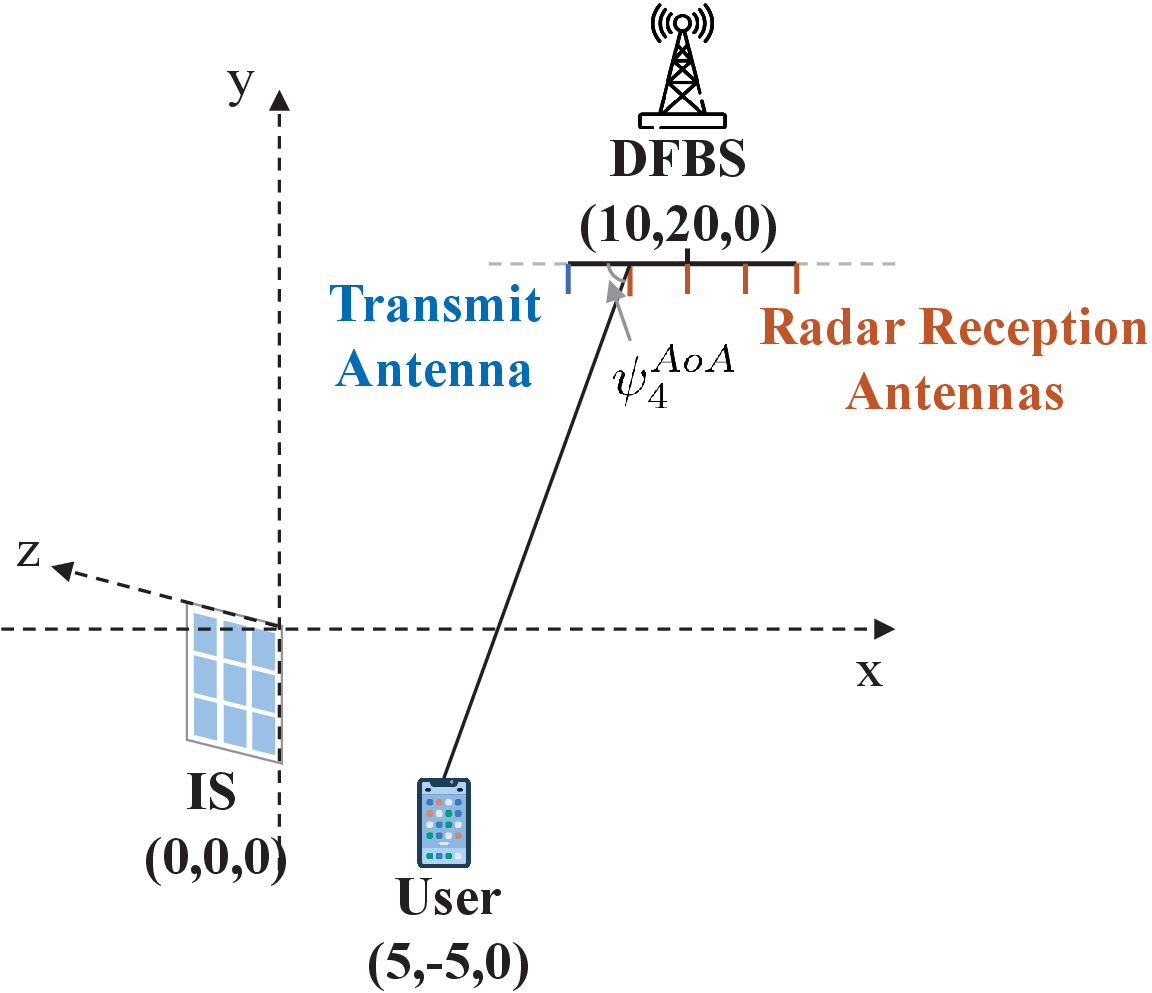}
\caption{Network topology of our simulations.}\label{Fig simulation model}
\end{figure}

This section details simulations to validate the efficacy of the proposed method. As illustrated in Fig. \ref{Fig simulation model}, the DFBS, user, and IS are located at $(10,20,0)$, $(5,-5,0)$, and $(0,0,0)$, respectively. Denote by $\{d_{1},d_{2},d_{3},d_4,d_5\}$ the distance of the different channels in Fig. \ref{Fig model}(b). Following \cite{Xu2024}, the path loss is modeled as $\alpha_i=10^{-(30+22\log_{10}d_i)/20}$ for $i\in\{1,2,3,4,5\}$. The phase-shift complex exponentials are given by 
\begin{align}
& \xi_{1}=e^{-j\frac{2\pi}{\lambda} d_{1}}e^{j\frac{2\pi}{\lambda}\epsilon_R\cos(\psi_{4}^{AoA})}, \xi_{2}\!=\!e^{-j\frac{2\pi}{\lambda}d_{2}}e^{j\frac{2\pi}{\lambda}\epsilon_R\cos(\psi_{5}^{AoA})},\notag\\
&\xi_{3}=e^{-j\frac{2\pi}{\lambda}d_{3}}, \xi_{4}=e^{-j\frac{2\pi}{\lambda}d_{4}}, \xi_{5}=e^{-j\frac{2\pi}{\lambda}d_{5}}, \notag
\end{align}
where $e^{j\frac{2\pi}{\lambda}\epsilon_R\cos(\psi_{4}^{AoA})}$ and $e^{j\frac{2\pi}{\lambda}\epsilon_R\cos(\psi_{5}^{AoA})}$ follow from the array response of the TA and the RAs, respectively. As shown in Fig. \ref{Fig simulation model}, the reflective elements of the IS are deployed on the $y$-$z$ plane, and the antennas of the DFBS are deployed along the $x$-axis direction. The steering vectors  are given by
\begin{align}
&\bm{\xi}_I(\vartheta)\! =\!\! \big(\mathbf{1}_{\!N_z}\!\!\otimes\!(1,e^{-j\frac{2\pi}{\lambda}\epsilon_I\cos(\vartheta)},\ldots,e^{-j\frac{2\pi}{\lambda}\epsilon_I(N_y-1)\cos(\vartheta)})\!\big)^{\!\top},\notag\\
&\bm{\xi}_R(\vartheta) = (1,e^{-j\frac{2\pi}{\lambda}\epsilon_R\cos(\vartheta)},\ldots,e^{-j\frac{2\pi}{\lambda}\epsilon_R(M-1)\cos(\vartheta)})^\top,\notag
\end{align}
where $N_y$ and $N_z$ denote the number of reflective elements per row along the $y$-axis direction and  per column along the $z$-axis direction, respectively, $\epsilon_R$ and $\epsilon_I$ denote the antenna spacing and reflective element spacing respectively.

Assume that the IS has $N=300$ reflective elements with $N_y=30$ and $N_z=10$. The BS has $M=4$ RAs. The transmit power and the noise power are $P=10$ dBm and $\sigma^2=-110$ dBm, respectively. The transmission takes $L=1000$ time slots,  and the wavelength is $\lambda = 0.06$ meters. The antennas and reflective elements are half-wavelength spacing, i.e., $\epsilon_R\!=\!\epsilon_I=0.03$ meters.

We compare the proposed scheme with two benchmarks. The first is an exhaustive approach that computes the optimal $\nu$ that maximizes the distance between the actual AoA $\psi_{4}^{AoA}$ and the AoA in \eqref{eqn first change 3}. The second is called the max-inner method that optimizes $\nu$ by maximizing $|(\mathbf{h}^H_{4}+\xi_{5}^*\bm{\xi}_R(\psi_{5}^{AoA})^H\nu^*)\bm{\xi}_R(\psi_{5}^{AoA})|^2$ subject to constraints \eqref{eqn transform problem second 2} and \eqref{eqn transform problem second 3}. Notice that the optimal $\nu$ for the max-inner method can be obtained through an algorithm similar to Algorithm \ref{alg nu}.

\begin{figure}[!t]
\begin{minipage}[t]{0.49\linewidth}
\centering
\includegraphics[height=3.5cm]{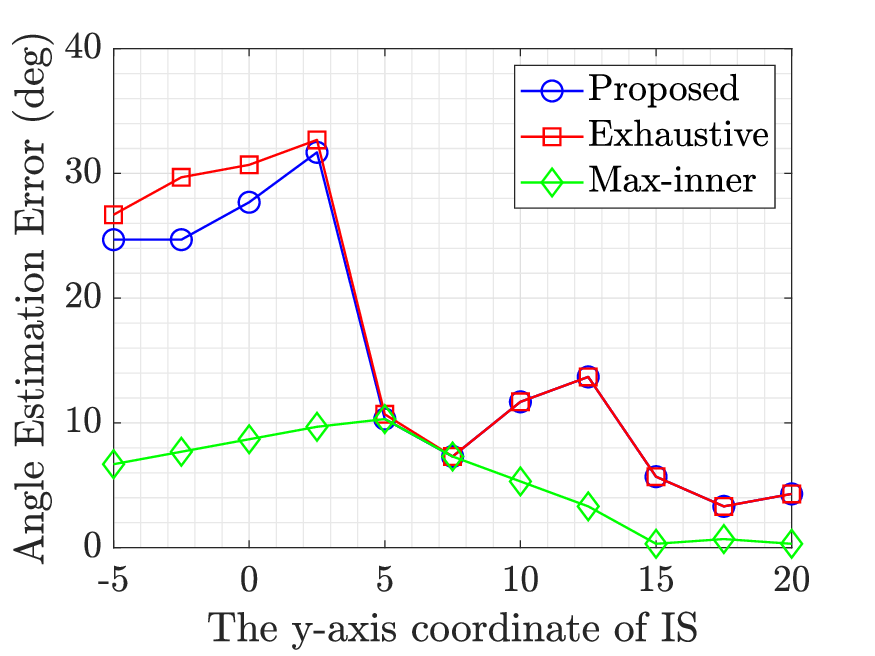}
\subcaption{ }
\end{minipage}
\begin{minipage}[t]{0.49\linewidth}
\centering
\includegraphics[height=3.5cm]{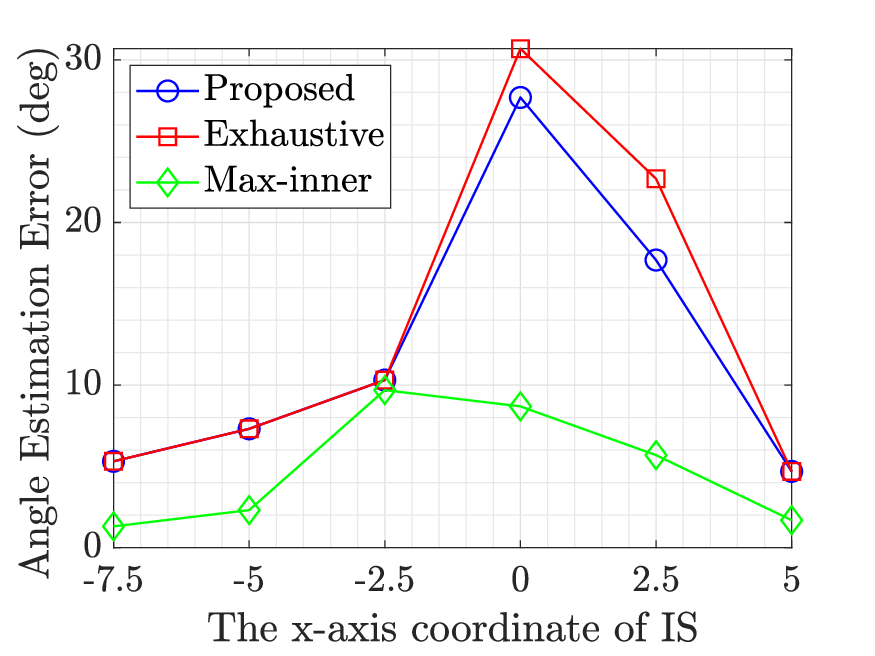}
\subcaption{ }
\end{minipage}
\caption{The AoA estimation error versus the IS location.}
\label{Fig Errorvslocation}
\vspace{-1em}
\end{figure}

Fig. \ref{Fig Errorvslocation} shows the angle estimation error (in degrees) between the actual and estimated AoAs across different methods versus the IS location. The IS is positioned at $(0,y,0)$ and $(x,0,0)$ in Figs. \ref{Fig Errorvslocation}(a) and \ref{Fig Errorvslocation}(b), respectively. Fig. \ref{Fig Errorvslocation} illustrates that the error of the proposed method is close to that of the exhaustive approach, and typically larger than that of the max-inner method. Notice that the error varies with the IS position since the changes in AoA and AoD will affect the beam pattern of the signals \cite{Liu2022}.

%Fig. \ref{Fig Errorvslocation} plots the angle estimation error, i.e., the difference (in degree) between the actual AoA $\psi_{UR}^{AoA}$ and the estimated AoA \eqref{eqn solution ML 2}, for difference methods as a function of the IS location. The IS is positioned at $(0,y,0)$ and $(x,0,0)$ in Figs. \ref{Fig Errorvslocation}(a) and \ref{Fig Errorvslocation}(b), respectively. Fig. \ref{Fig Errorvslocation} illustrates that the angle estimation error achieved by the proposed scheme is comparable to that attained by the exhaustive approach, and is larger than that of the max-inner method in most cases. This substantiates the efficacy of the proposed method. The angle estimation errors of the proposed method and the exhaustive approach fluctuate along the coordinate of the IS. This is due to the fact that when the IS is moving along certain direction, the changes in AoA and AoD will cause the beam pattern of the wireless signal to fluctuate \cite{Liu2022}.

\begin{figure}[!t]
\begin{minipage}[t]{0.49\linewidth}
\centering
\includegraphics[height=3.5cm]{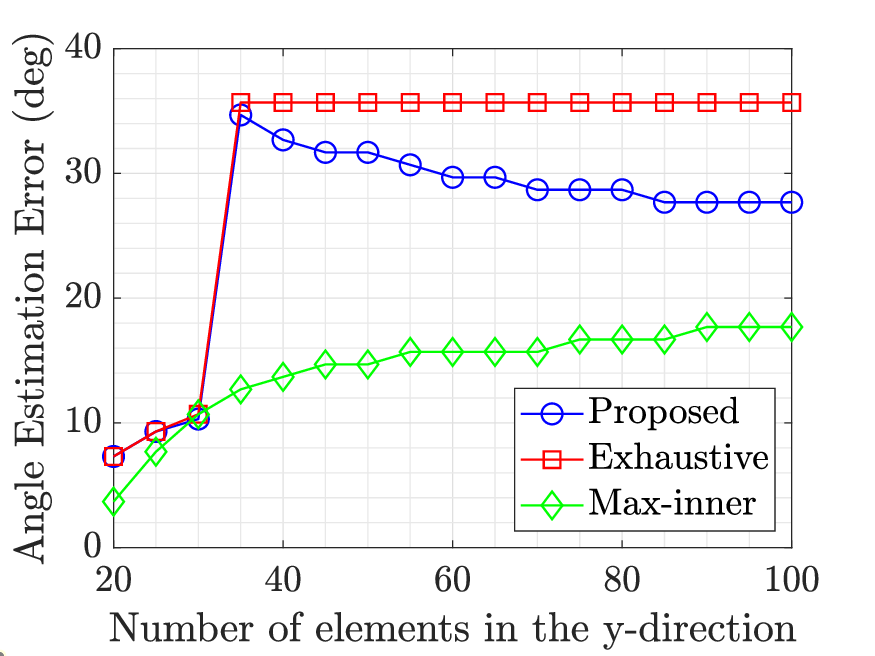}
\subcaption{angle estimation error vs. $N_y$}
\end{minipage}
\begin{minipage}[t]{0.49\linewidth}
\centering
\includegraphics[height=3.5cm]{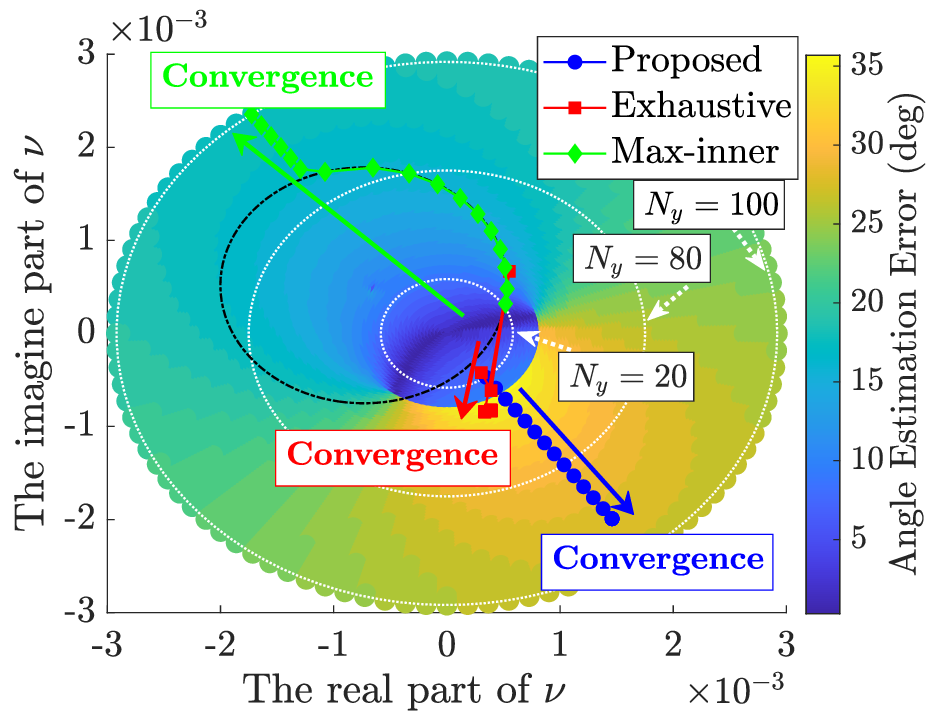}
\subcaption{computing $\nu$}
\end{minipage}
\caption{The angle estimation error vs. the number of reflective elements along the $y$-axis direction.}
\label{Fig ErrorvsNy}
\end{figure}

Fig. \ref{Fig ErrorvsNy} illustrates the angle estimation error (in degrees) and the corresponding $\nu$ as a function of $N_y$. Fig. \ref{Fig ErrorvsNy}(a) shows that the error of the proposed method  is larger than that of the max-inner method, and is close to that of the exhaustive approach for small $N_y$. As $N_y$ increases, the errors from all methods converge, and the error of the proposed method decreases. This trend is explained in Fig. \ref{Fig ErrorvsNy}(b) which plots the  $\nu$'s computed from different methods. The black and white circles represent the boundaries set by constraints \eqref{eqn transform problem second 2} and \eqref{eqn transform problem second 3}, respectively. The feasible region for each $N_y$ is within the corresponding white circle, excluding the region enclosed by the black circle. As $N_y$ increases, the white circle enlarges, leading the optimal $\nu$ for each method towards the unconstrained optimal. For low $N_y$, the optimal $\nu$ of the proposed method (blue points) aligns closely with that of the exhaustive method. However, with higher $N_y$, these blue points diverge towards their convergence point, deviating from the optimal $\nu$'s of the exhaustive method and diminishing the performance. Nonetheless, the proposed method still surpasses the max-inner method.

\begin{figure}[t]
\centering
\includegraphics[height=3.5cm]{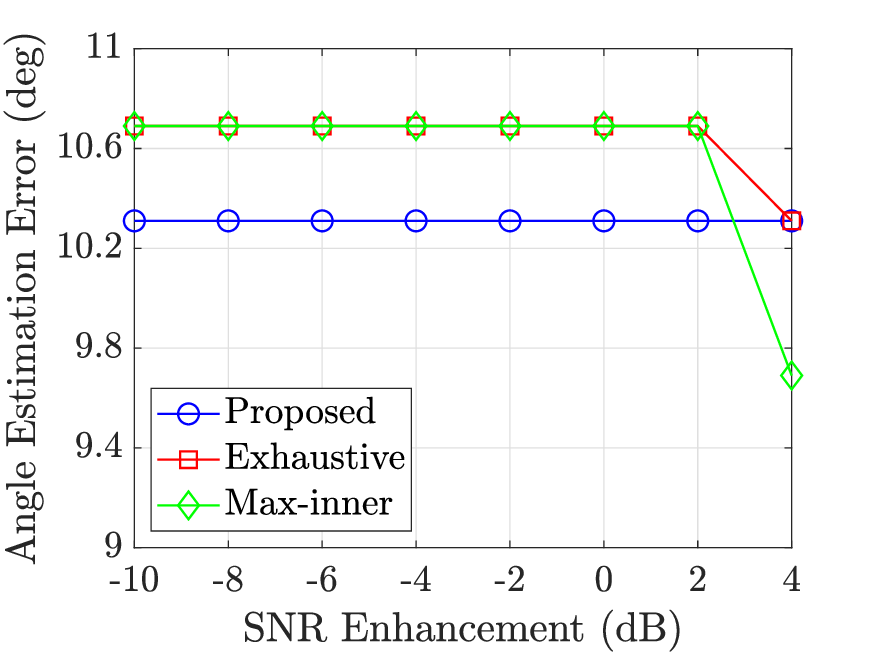}
\caption{The angle estimation error vs. the SNR enhancement.}\label{Fig ErrorvsSNR}
\end{figure}

Fig. \ref{Fig ErrorvsSNR} shows the angle estimation error vs. the SNR enhancement which is based on the without IS case. The error of the proposed method aligns with the global maximum and exceeds that of the max-inner method at high SNR enhancements, but is slightly less when SNR enhancement is below 2 dB. It remains constant with SNR changes because the optimal solution $\nu_6$ is independent of $\eta$.

\section{Conclusion}
%This paper has proposed a phase shift optimization approach for the IS to minimize the estimation accuracy of the user position under the premise of meeting the SNR requirement. The original optimization problem of maximizing the angle estimation error has been transformed into an optimization problem of minimizing the inner product of the user angle and the echo channel. The optimal solution of the phase shifts have been obtained in a closed form by introducing  the reflection channel as the new optimization variable. Simulations have show that the performance of the proposed algorithm is close to the exhaustive method, and is significantly better than the benchmark algorithm.

This paper considers using the IS to modify the wireless environment to enable radar stealth in the presence of unauthorized ISAC. We propose the notion of inner-product utility to quantify the capability of the DFBS to sense the target AoA. Although the resulting anti-sensing problem with the communication SNR constraint is nonconvex, we show that it admits a closed-form solution from a geometric viewpoint. According to the simulation results, the proposed phase shifting algorithm for the IS can significantly hinder the adversarial sensing while limiting the side effects on communications.

\bibliographystyle{IEEEtran}     
\bibliography{IEEEabrv,strings}

% Generated by IEEEtran.bst, version: 1.14 (2015/08/26)
\begin{thebibliography}{10}
\providecommand{\url}[1]{#1}
\csname url@samestyle\endcsname
\providecommand{\newblock}{\relax}
\providecommand{\bibinfo}[2]{#2}
\providecommand{\BIBentrySTDinterwordspacing}{\spaceskip=0pt\relax}
\providecommand{\BIBentryALTinterwordstretchfactor}{4}
\providecommand{\BIBentryALTinterwordspacing}{\spaceskip=\fontdimen2\font plus
\BIBentryALTinterwordstretchfactor\fontdimen3\font minus
  \fontdimen4\font\relax}
\providecommand{\BIBforeignlanguage}[2]{{%
\expandafter\ifx\csname l@#1\endcsname\relax
\typeout{** WARNING: IEEEtran.bst: No hyphenation pattern has been}%
\typeout{** loaded for the language `#1'. Using the pattern for}%
\typeout{** the default language instead.}%
\else
\language=\csname l@#1\endcsname
\fi
#2}}
\providecommand{\BIBdecl}{\relax}
\BIBdecl

\bibitem{Husnain2019}
H.~Ahmad, A.~Tariq, A.~Shehzad, M.~S. Faheem, M.~Shafiq, I.~A. Rashid,
  A.~Afzal, A.~Munir, M.~T. Riaz, H.~T. Haider, A.~Afzal, M.~B. Qadir, and
  K.~Zubair, ``Stealth technology: Methods and composite materials—a
  review,'' \emph{Polym. Compos.}, vol.~40, no.~12, pp. 4457--4472, Jun. 2019.

\bibitem{Fang2024}
X.~Fang, M.~Li, S.~Li, D.~Ramaccia, A.~Toscano, F.~Bilotti, and D.~Ding,
  ``Diverse frequency time modulation for passive false target spoofing: Design
  and experiment,'' \emph{IEEE Trans. Microw. Theory Techn.}, vol.~72, no.~3,
  pp. 1932--1942, Mar. 2024.

\bibitem{Wu2021}
Q.~Wu, S.~Zhang, B.~Zheng, C.~You, and R.~Zhang, ``Intelligent reflecting
  surface-aided wireless communications: A tutorial,'' \emph{IEEE Trans.
  Commun.}, vol.~69, no.~5, pp. 3313--3351, May 2021.

\bibitem{Liu2023}
R.~Liu, M.~Li, H.~Luo, Q.~Liu, and A.~L. Swindlehurst, ``Integrated sensing and
  communication with reconfigurable intelligent surfaces: Opportunities,
  applications, and future directions,'' \emph{IEEE Wirel. Commun.}, vol.~30,
  no.~1, pp. 50--57, Feb. 2023.

\bibitem{Rihan2024}
M.~Rihan, A.~Zappone, S.~Buzzi, G.~Fodor, and M.~Debbah, ``Passive versus
  active reconfigurable intelligent surfaces for integrated sensing and
  communication: Challenges and opportunities,'' \emph{IEEE Network}, vol.~38,
  no.~3, pp. 218--226, May 2024.

\bibitem{Yuan2024}
Y.~Fang, S.~Zhang, X.~Li, X.~Yu, J.~Xu, and S.~Cui, ``Multi-{IRS}-enabled
  integrated sensing and communications,'' \emph{IEEE Trans. Commun.}, vol.~72,
  no.~9, pp. 5853--5867, Sept. 2024.

\bibitem{Xing2023}
Z.~Xing, R.~Wang, and X.~Yuan, ``Joint active and passive beamforming design
  for reconfigurable intelligent surface enabled integrated sensing and
  communication,'' \emph{IEEE Trans. Commun.}, vol.~71, no.~4, pp. 2457--2474,
  Apr. 2023.

\bibitem{Shao2022}
X.~Shao, C.~You, W.~Ma, X.~Chen, and R.~Zhang, ``Target sensing with
  intelligent reflecting surface: Architecture and performance,'' \emph{IEEE J.
  Sel. Areas Commun.}, vol.~40, no.~7, pp. 2070--2084, 2022.

\bibitem{Zheng2024}
B.~Zheng, X.~Xiong, J.~Tang, and R.~Zhang, ``Intelligent reflecting
  surface-aided electromagnetic stealth against radar detection,'' \emph{IEEE
  Trans. Signal Process.}, vol.~72, pp. 3438--3452, 2024.

\bibitem{Xiong2024}
X.~Xiong, B.~Zheng, A.~L. Swindlehurst, J.~Tang, and W.~Wu, ``A new intelligent
  reflecting surface-aided electromagnetic stealth strategy,'' \emph{IEEE
  Wirel. Commun. Lett.}, vol.~13, no.~5, pp. 1498--1502, May 2024.

\bibitem{Wang2024}
\BIBentryALTinterwordspacing
H.~Wang, B.~Zheng, X.~Shao, and R.~Zhang, ``Intelligent reflecting
  surface-aided radar spoofing,'' 2024. [Online]. Available:
  \url{https://arxiv.org/abs/2405.06951}
\BIBentrySTDinterwordspacing

\bibitem{Shao2024}
X.~Shao and R.~Zhang, ``Target-mounted intelligent reflecting surface for
  secure wireless sensing,'' \emph{IEEE Trans. Wireless Commun.}, vol.~23,
  no.~8, pp. 9745--9758, Aug. 2024.

\bibitem{Staat2022}
P.~Staat, S.~Mulzer, S.~Roth, V.~Moonsamy, M.~Heinrichs, R.~Kronberger,
  A.~Sezgin, and C.~Paar, ``{IRS}hield: A countermeasure against adversarial
  physical-layer wireless sensing,'' in \emph{Proc. IEEE Symp. Secur. Priv.},
  2022, pp. 1705--1721.

\bibitem{Kang2024}
Z.~Kang, C.~You, and R.~Zhang, ``Active-{IRS}-aided wireless communication:
  Fundamentals, designs and open issues,'' \emph{IEEE Wirel. Commun.}, vol.~31,
  no.~3, pp. 368--374, 2024.

\bibitem{Sun2023}
H.~Sun, W.~Mei, L.~Zhu, and R.~Zhang, ``User power measurement based {IRS}
  channel estimation via single-layer neural network,'' in \emph{Proc. IEEE
  Global Commun. Conf.}, 2023, pp. 6579--6584.

\bibitem{Cheng2019}
Z.~Cheng, B.~Liao, S.~Shi, Z.~He, and J.~Li, ``Co-design for overlaid {MIMO}
  radar and downlink {MISO} communication systems via {C}ram{\'e}r–{Rao}
  bound minimization,'' \emph{IEEE Trans. Signal Process.}, vol.~67, no.~24,
  pp. 6227--6240, Dec. 2019.

\bibitem{Han2022}
Y.~Han, S.~Zhang, L.~Duan, and R.~Zhang, ``Double-{IRS} aided {MIMO}
  communication under {LoS} channels: Capacity maximization and scaling,''
  \emph{IEEE Trans. Commun.}, vol.~70, no.~4, pp. 2820--2837, Apr. 2022.

\bibitem{Song2023}
X.~Song, J.~Xu, F.~Liu, T.~X. Han, and Y.~C. Eldar, ``Intelligent reflecting
  surface enabled sensing: Cram{\'e}r-{R}ao bound optimization,'' \emph{IEEE
  Trans. Signal Process.}, vol.~71, pp. 2011--2026, 2023.

\bibitem{Liu2022}
F.~Liu, Y.-F. Liu, A.~Li, C.~Masouros, and Y.~C. Eldar, ``Cram{\'e}r-{R}ao
  bound optimization for joint radar-communication beamforming,'' \emph{IEEE
  Trans. Signal Process.}, vol.~70, pp. 240--253, 2022.

\bibitem{Bekkerman2006}
I.~Bekkerman and J.~Tabrikian, ``Target detection and localization using {MIMO}
  radars and sonars,'' \emph{IEEE Trans. Signal Process.}, vol.~54, no.~10, pp.
  3873--3883, Oct. 2006.

\bibitem{Xu2024}
F.~Xu, J.~Yao, W.~Lai, K.~Shen, X.~Li, X.~Chen, and Z.-Q. Luo, ``Coordinating
  multiple intelligent reflecting surfaces without channel information,''
  \emph{IEEE Trans. Signal Process.}, vol.~72, pp. 31--46, 2024.

\end{thebibliography}

\end{document}